\setlist{nosep}
\newtheorem{definition}{Definition}
\newtheorem{lemma}{Lemma}
\newtheorem{theorem}{Theorem}
\newcommand{\tw}{\operatorname{tw}}
\newcommand{\coloneqq}{\mathrel{\mathop:}=}
\let\emptyset\varnothing
\def\@copyrightspace{\relax}
\begin{document}

 % CGO require weird paper size
\setlength{\pdfpageheight}{\paperheight}
\setlength{\pdfpagewidth}{\paperwidth}

\title{lospre in linear time}

\authorinfo{Philipp Klaus Krause}
           {Albert-Ludwigs-Universität Freiburg}
           {krauseph@informatik.uni-freiburg.de}

%\authorinfo{Name}
%           {Affiliation}
%           {Email}

\maketitle

\begin{abstract}
Lifetime-optimal speculative partial redundancy elimination (lospre) is the most advanced currently known redundancy elimination technique. It subsumes many previously known approaches, such as common subexpression elimination, global common subexpression elimination, and loop-invariant code motion. However, previously known lospre algorithms have high time complexity; faster but less powerful approaches have been used and developed further instead. We present a simple linear-time algorithm for lospre for structured programs that can also handle some more general scenarios compared to previous approaches. We prove that our approach is optimal and that the runtime is linear in the number of nodes in the control-flow graph. The condition on programs of being structured is automatically true for many programming languages and for others, such as C, is equivalent to a bound on the number of \texttt{goto} labels per function. An implementation in a mainstream C compiler demonstrates the practical feasibility of our approach. Our approach is based on graph-structure theory and uses tree-decompositions.
We also show that, for structured programs, the runtime of deterministic implementations of the previously known MC-PRE and MC-SSAPRE algorithms can be bounded by $O(n^{2.5})$, improving the previous bounds of $O(n^3)$.
\end{abstract}

\category{D.3.3}{Processors}{Programming Languages---Compilers}

\keywords{lospre, speculative partial redundancy elimination, code motion, structured programs, tree-decomposition}

\section{Introduction}\label{lospre:Introduction}

Redundancy elimination is a technique commonly used in current optimizing compilers. Even early optimizing compilers had common subexpression elimination (CSE) for straight-line code. Global common subexpression elimination (GCSE)\index{global common subexpression elimination}~\cite{gcse} extended this to the whole control-flow graph (CFG). Partial redundancy elimination (PRE)\index{partial redundancy elimination}~\cite{pre} generalized GCSE (and some other techniques such as loop-invariant code-motion (LICM)) to computations that are redundant only on some paths in the CFG. Improvements led to lazy code-motion (LCM)\index{lazy code motion}~\cite{lcm}, which is a lifetime-optimal variant of PRE, i.\,e.\ the lifetimes of introduced temporary variables are minimized, which is important to keep  register pressure low. Another improvement to PRE is speculative PRE (SPRE)~\cite{spre0,spre1,spre2}, which can increase the number of computations on some paths in order to reduce the total number of computations done (based on profiling information). The natural improvement is combining the advantages of LCM and SPRE, resulting in lifetime-optimal SPRE (lospre), which was first achieved by the min-cut-PRE (MC-PRE) algorithm~\cite{MC-PRE}.

However, MC-PRE relies on solving a weighted minimum cut problem on directed graphs. This problem seems to be harder than its equivalent on undirected graphs. The fastest known algorithm~\cite{Karger} is randomized, and will likely give a result in $O(n^2\log^3(n))$. Typical implementations use deterministic algorithms resulting in cubic runtime. Some researchers consider this too much for some applications, in particular for just-in-time compilation and thus developed faster approaches that are not optimal, but perform close to lospre for many real-world scenarios~\cite{fppre,ispre}. MC-SSAPRE~\cite{MC-SSAPRE} is a newer optimal approach that works for programs in static single assignment form. In practice it is often faster than MC-PRE, but it has the same worst case complexity, since it also needs to solve a weighted minimum cut problem on directed graphs. MC-SSAPRE is also considered hard to implement~\cite{PRELLVM}.

An alternative to lospre is complete PRE (ComPRE)~\cite{ComPRE}, which completely eliminates redundancies and is lifetime-optimal. It is not speculative, and eliminates more dynamic computations than speculative approaches, which can offer benefits when optimizing for code speed in some scenarios. However, it is not suitable for optimization for code size, and it restructures the CFG, resulting in additional conditional jumps being introduced. For most common architectures, conditional jumps are far more expensive in terms of speed than typical computations~\cite{Riseman1972,Eyerman2006}, so the gain from eliminating redundancy at the cost of introducing conditional jumps is questionable.

We present a simple approach to lospre, which does not sacrifice optimality, and runs in linear time. It is based on graph-structure theory, in particular the bounded tree-width of control-flow graphs of structured programs. It has been implemented in a mainstream C compiler and has low compilation time overhead. It also generalizes lospre further than previous approaches, by allowing trade-offs between costs from computations and costs from variable lifetimes. Unlike previous approaches, our approach can also consider benefits from reduction in the life-times of operands of redundant expressions, while previous approaches only considered costs from the newly introduced temporary variable.

\section{\label{test}Problem Description}\label{problem}

Programs tend to contain redundancies, and eliminating them is an important goal in optimizing compilers. Figure \ref{Cfunc} shows a function written in the C programming language. It will be transformed by compilers into a form of intermediate code. Figure \ref{Ccfg} shows the control-flow graph and intermediate code for this function as it is used in the SDCC \cite{sdcc} compiler before redundancy elimination. The array-index style access got transformed into a sequence of operations, first multiplying the index i by the size of \texttt{long} (the multiplication already was further transformed into a left-shift), followed by an addition of the result to the array base address and then a read from memory. Redundancy elimination would move these operations before the branch, as can be seen in Figure \ref{Ccfglospre}.

\suppressfloats

\begin{figure}
\begin{lstlisting}
#include <stdbool.h>

extern long *a;
extern long c;

void f(bool b, int i)
{
    if(b)
        c = a[i] + 8;
    else
        c = a[i] - 13;
}
\end{lstlisting}
\caption{\label{Cfunc}Function written in C}
\end{figure}

\begin{figure}
\centerline{\includegraphics[scale=0.36]{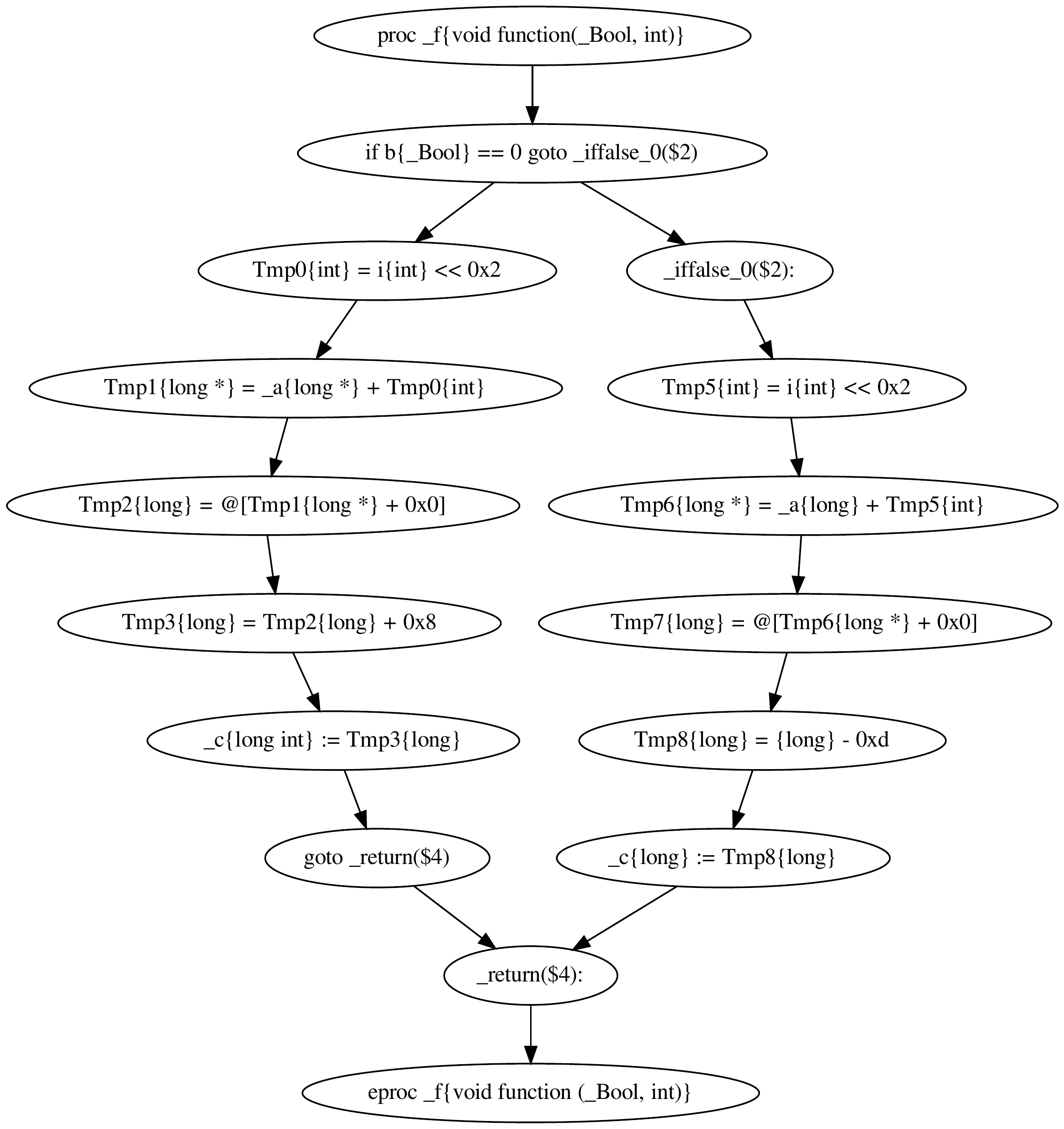}}
\vspace{2mm}
\caption{\label{Ccfg}CFG before redundancy elimination}
\end{figure}

\begin{figure}
\centerline{\includegraphics[scale=0.36]{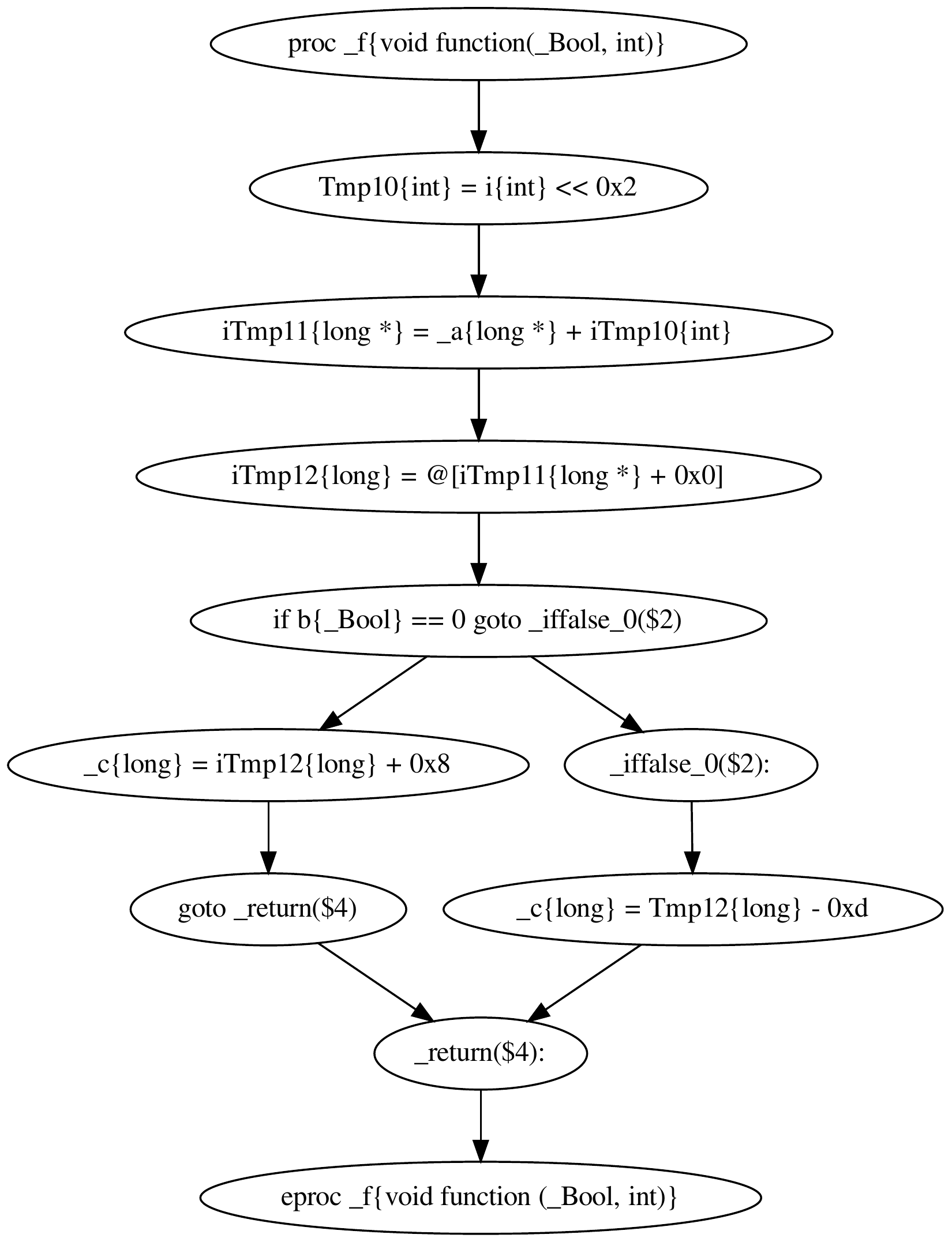}}
\vspace{2mm}
\caption{\label{Ccfglospre}CFG after redundancy elimination}
\end{figure}

We will now formally define the control-flow graph as we use it in our approach. In particular, it is a weighted graph to allow the representation of costs from calculations and variable lifetimes. Often, such costs will be e.\,g.\ represented by natural numbers or something similar, but we want to keep things a bit more general for now:

\begin{definition}[Con\-trol-flow graph (CFG)]
A \emph{con\-trol-flow graph}\index{control-flow graph} is a weighted directed graph $(V, E, c, l)$ with node set $V$ and edge set $E \subseteq V^2$ and weight functions $c\colon E \to \mathcal{K}$ and $l\colon V \to \mathcal{K}$ for some ordered $\mathcal{K}$ that has addition. There is a unique source (node without predecessors).
\end{definition}

For simplicity we will sometimes ignore the weights and treat $G$ as the directed graph $(V, E)$. The nodes of the CFG correspond to instructions in a program. We chose this notion over the more common one of using basic blocks as nodes in the CFG, since it simplifies the discussion of our approach a bit and makes it subsume CSE as well. For applications where compilation speed is essential, it is easy to reformulate our approach to use basic blocks, and do CSE as a pre-processing step. The weight function $c$ gives the cost of subdividing an edge and inserting a computation there. It depends on the optimization goal. E.\,g.\ when optimizing for speed or energy consumption, execution frequencies (estimated or obtained from a profiler) would be used, and $\mathcal{K}$ could be the set of possible values of a floating-point data type. When optimizing for code size one could use a constant $c \in \mathbb{N}$ and $\mathcal{K} = \mathbb{N}$ instead. The weight function $l$ gives the cost of having a new temporary variable alive at a node. When just minimizing the life-time, a constant can be used. More sophisticated approaches to $l$ could take other aspects, such as register pressure at the nodes, into account. This latter aspect, and the possibility of handling costs from calculations and lifetimes in a unified way is something previous approaches, such as MC-PRE and MC-SSAPRE could not do.

For a given expression the set of nodes in the CFG where it is calculated is called the \emph{use set}\index{use set} $\mathcal{U} \subseteq V$. In redundancy elimination techniques, such calculations from $\mathcal{U}$ are replaced by assignments from a new temporary variable, which is initialized by new calculations. For each node in the CFG, we decide whether the new temporary variable should be alive there. We call the set of such nodes the \emph{life set}\index{life set} $\mathcal{L}  \subseteq V$. There also can be invalidating nodes, which we have to be careful about. They invalidate the result of the calculation in the expression: E.\,g.\ when the expression we want to replace is $a + b$, the node where the instruction $a = 7$ is done would be invalidating. We call the set of such nodes the \emph{invalidation set}\index{invalidation set} $\mathcal{I} \subseteq V$. We always consider the source and sinks (nodes without successors) of the CFG to be invalidating, since we have no knowledge about how variables might change before or after our program. Depending on these three sets some edges need to be subdivided and new calculations inserted. We call the set of such edges the \emph{calculation set}\index{calculation set}
\begin{displaymath}
\mathcal{C}(\mathcal{U}, \mathcal{L}, \mathcal{I}) \coloneqq \left\{(x, y) \in E\ \middle|\ x \notin \mathcal{L} \smallsetminus \mathcal{I}, y \in \mathcal{U} \cup \mathcal{L}\right\}.
\end{displaymath}

\begin{definition}[lospre]
Given a CFG $(V, E, c, l)$, use set $\mathcal{U} \subseteq V$ and invalidation set $\mathcal{I} \subseteq V$, the problem of \emph{lospre} is to find a life set $\mathcal{L} \subseteq V$, such that the cost
\begin{displaymath}
\sum_{e \in \mathcal{C}(\mathcal{U}, \mathcal{L}, \mathcal{I})}c(e) + \sum_{v \in \mathcal{L}}l(v)
\end{displaymath}
is minimized.
\end{definition}

For the typical lospre application, one could use $\mathcal{K} = \mathbb{Z}^2$ with lexicographical ordering. Optimizing for code size using $c(e) = (1, 0)$ or optimizing for speed using $c(e) = (p(e), 0)$, where $p$ gives an execution probability estimated using a profiler. $l = (0, 1)$ would then guarantee the lifetime-optimality.

Sometimes \emph{safety}\index{safety} is required. Safety means that no calculations may be done on values on which the original program would not perform them. In general, safety is not desirable, since it restricts choices and thus results in less optimization. However, on some architectures division by zero results in undesirable behaviour; in this case we need safety when the expression is division and we cannot predict the value of the divisor. Other examples would be memory reads from a calculated address on architectures with memory management (where reads from invalid addresses could result in a SIGSEGV terminating the program), accesses to variables declared using C \texttt{volatile}, or I/O accesses. The safety requirement can be handled by using a different invalidation set $\mathcal{I}'$ in place of $\mathcal{I}$ in lospre.

\begin{definition}[safety]
Given a CFG $(V, E)$, use set $\mathcal{U} \subseteq V$ and invalidation set $\mathcal{I} \subseteq V$, the problem of \emph{safety} is finding a set $\mathcal{I}' \supseteq \mathcal{I}$ of minimal size, such that no node outside of $\mathcal{I}'$ lies on a path between two nodes in $\mathcal{I}$ that does not contain a node in $\mathcal{U}$.
\end{definition}

Once we have $\mathcal{I}'$, we can use it in place of $\mathcal{I}$ in lospre.

\begin{figure}
\centerline{\includegraphics[scale=0.357]{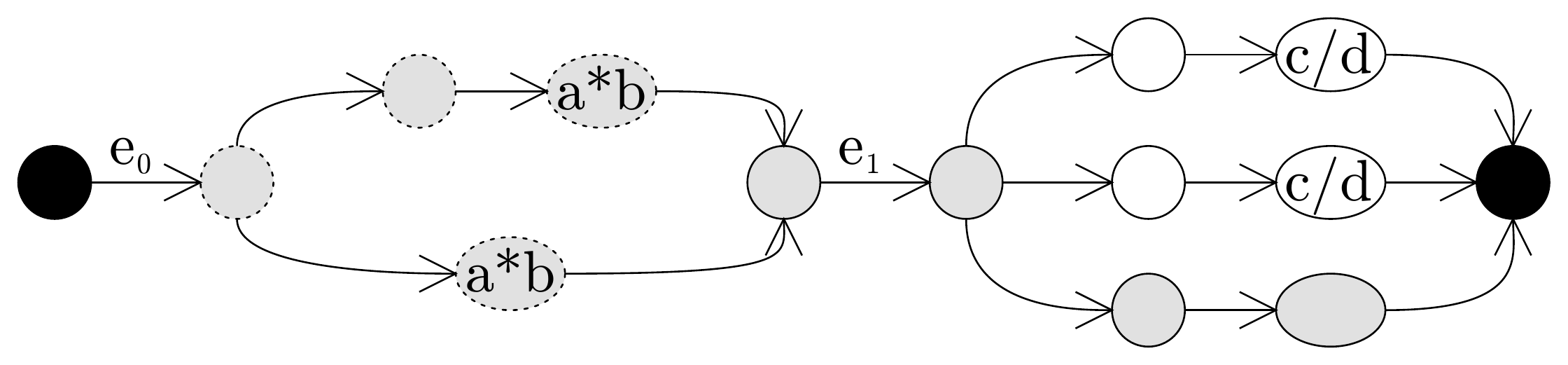}}
\caption{\label{ExampleRedundancy}Simplified CFG with redundancies}
\end{figure}

Figure \ref{ExampleRedundancy} shows a CFG with redundancies. $a * b$ is calculated in two places. Even a simple redundancy elimination technique, such as GCSE, would split $e_0$ and place the calculation there. For lospre we would have $\mathcal{U}$ consisting of the two nodes where $a * b$ is used, no invalidating nodes except for sink and source, thus $\mathcal{I}$ would consist of the black nodes only. The life set found by lospre would consist of the nodes with dashed contours, and thus $\mathcal{C}(\mathcal{U}, \mathcal{L}, \mathcal{I}) = \{e_0\}$.

$c / d$ is a slightly more complicated case: When optimizing for code size, and when safety is not required for division, one might want to split $e_1$ and do the calculation there. But when division requires safety this is not possible ($\mathcal{I}'$ would consist of the black and grey nodes). When optimizing for speed we will not want to pay the cost of calculating $c / d$ when it is not needed.

\section{Structured Programs\label{structured}}

Our approach is based on tree-decompositions. Tree-de\-com\-posi\-tions \cite{Halin, GMIII} have commonly been used to find polynomial algorithms on restricted graph classes for many problems that are hard on general graphs. This includes well known problems such as graph coloring and vertex cover. Algorithmic meta-theorems state that large classes of problems can be solved efficiently on such graph classes \cite{Bodlaender,Courcelle,Seese}, but the algorithms from these meta-theorems are often impractical due to huge constants in run-time and prohibitively hard to implement \cite{FrickGrohe,BodlaenderImpl}. Thus, usually there is still a need to find practical algorithms for individual problems, even where an algorithm with the same asymptotic runtime bound is already provided by a meta-theorem.

\begin{definition}[Tree-decompostion]\label{definition:tree-decomposition}
	A \emph{tree-decomposition} of a graph $G = (V, E)$ is a pair  $(T, \chi)$, consisting of a tree $T$ and a mapping $\chi$ from the node set of the tree $T$ into the power set of the nodes of the graph $G$. Such that the following conditions hold:
	\begin{itemize}
		\item For each $v \in V$ there exists a node $t$ of $T$ with $v \in \chi(t)$.
		\item For each edge $e \in E$ there exists a node $t$ of  $T$ so that both endpoints of $e$ are in $\chi(t)$.
		\item For each $v \in V$ the set $\{t \textrm{ node of } T \mid v \in \chi(t)\}$ is connected in $T$.
	\end{itemize}
	The $\chi(t)$ are called the \emph{bags}\index{bag} of the tree-decomposition.

	The \emph{width}\index{width of a tree-decomposition} of a tree-decomposition $(T, \chi)$ is
	\begin{displaymath}
		\operatorname{w}(T, \chi) \coloneqq \max\left\{\left|\chi(t)\right|-1\ \middle|\ t \textrm{ node of } T\right\}.
	\end{displaymath}

	The \emph{tree-width of $G$}\index{tree-width} is
	\begin{displaymath}
		\tw(G) \coloneqq \min\left\{\operatorname{w}(T, \chi)\ \middle|\ (T, \chi) \text{ is a tree-dec.\ of }G\right\}.
	\end{displaymath}
\end{definition}

Intuitively, tree-width indicates how tree-like a graph is. Nontrivial trees have tree-width one.
Cliques on $n \geq 1$ nodes have tree-width $n - 1$.
Series-parallel graphs have tree-width at most $2$.
Figure \ref{ExDec} gives an example of a tree-decomposition of minimum width for the graph in Figure \ref{ExGraph}. Tree-decompositions are usually defined for undirected graphs, and our definition of a tree-decomposition for a directed graph is equivalent to the tree-decomposition of the graph interpreted as undirected. We will use the notation $|T|$ for the number of nodes in the tree $T$.

\begin{figure}
\centerline{
\subfigure[\label{ExGraph}Graph]{
\includegraphics[scale=0.9]{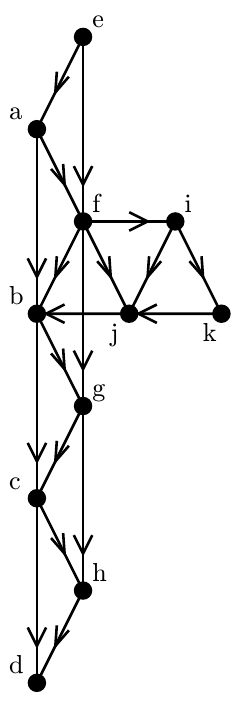}
}
\subfigure[\label{ExDec}Tree-dec.]{
\includegraphics[scale=0.9]{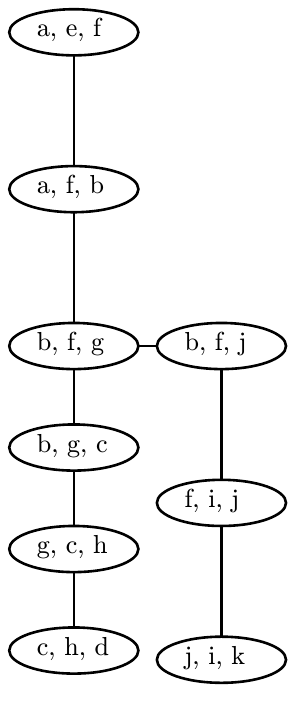}
}
\subfigure[\label{ExNDec}Nice tree-dec.]{
\includegraphics[scale=0.9]{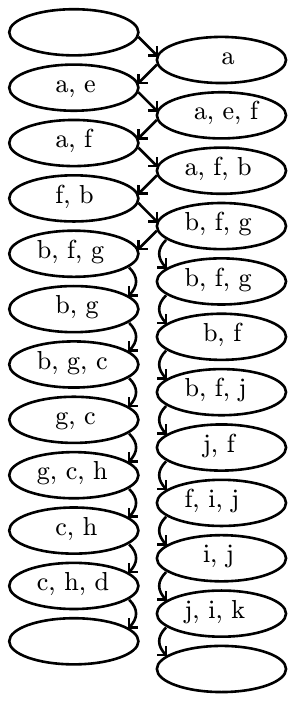}
}
}
\caption{\label{ExampleDecomposition}A graph and tree-decompositions of minimum width}
\end{figure}

\begin{definition}[Structured program]
Let $\mathfrak{k} \in \mathbb{N}$ be fixed. A program is called $\mathfrak{k}$-\emph{structured}, if its CFG has tree-width at most~$\mathfrak{k}$.
\end{definition}

Programs written in Algol or Pascal are $2 + \mathfrak{g}$-structured, if the number of labels targeted by goto statements per function does not exceed $\mathfrak{g}$, Modula-2 programs are 5-structured~\cite{Thorup1998}. Programs written in C are $(7 + \mathfrak{g})$-structured if the number of labels targeted by gotos per function does not exceed $\mathfrak{g}$~\cite{Ctree}. Similarly, Java programs are $(6 + \mathfrak{g})$-structured if the number of labels targeted by labeled breaks and labeled continues per function does not exceed $\mathfrak{g}$~\cite{Java}. Ada programs are $(6 + \mathfrak{g})$-structured if the number of labels targeted by gotos and labeled loops per function does not exceed $\mathfrak{g}$~\cite{Ada}. Coding standards tend to place further restrictions, resulting e.\,g.\ in C programs being $5$-structured when adhering to the widely adopted MISRA-C:2004~\cite{MISRA} standard. Empirically, tree-width greater than $3$ is very rare in programs~\cite{Java,Ctree}.
There are various efficient ways of obtaining tree-decompositions of small width for control-flow graphs~\cite{Thorup1998,Ctree,TreewidthcomputationsIII}.
%Tree-decompositions of minimum width can be computed efficiently for graphs of bounded tree-width~\cite{bodlaenderalg,Thorup1998,fugitive}.%Ctree

Often proofs and algorithms are easier to describe, understand and implement when using nice tree-decompositions. Nice tree-decompositions have been used at least since 1994 \cite{nice}, with minor variation in the definitions, sometimes relaxing conditions where they are not needed.  We use the following Definition~\ref{lospre:nicetreedec}, in which the tree in the tree-decomposition is directed. A root in a directed tree is a unique node $r$, such that for every node $n$ in the tree there is a directed path from $r$ to $n$.

\begin{definition}[\label{lospre:nicetreedec}Nice Tree-Decomposition]
A tree-de\-com\-posi\-tion $(T, \chi)$ of a directed graph $G$ is called \emph{nice}, if
\begin{itemize}
\item $T$ has a root $t$, $\chi(t) = \emptyset$.
\item Each node $i$ of $T$ is of one of the following types:
\begin{itemize}
\item Leaf, no children, $\chi(i) = \emptyset$.
\item Introduce node, has one child $j, \chi(j) \subsetneq \chi(i)$.
\item Forget node, has one child $j, \chi(j) \supsetneq \chi(i), |\chi(j) \smallsetminus \chi(i)| = 1$.
\item Join node, has two children $j_1, j_2, \chi(i) = \chi(j_1) = \chi(j_2)$.
\end{itemize}
\end{itemize}
\end{definition}

This terminology is inspired by a bottom-up approach, which is how most algorithms on tree-decompositions (including the ones we propose) work.
Given a tree-decomposition, a nice one of the same width can be found easily in linear time. Figure \ref{ExNDec} shows a nice tree-de\-com\-posi\-tion for the graph in Figure \ref{ExGraph}.

In compiler construction approaches based on the bounded tree-width of structured programs are currently found in register allocation~\cite{Thorup1998,Bodlaender,KrauseRalloc}, data-flow analysis~\cite{Dominators} and bank-selection instruction placement~\cite{Naddr}.

One approach to improving the runtime of MC-PRE would be to replace the min-cut step by one based on tree-de\-com\-posi\-tions. However this would not affect other parts of MC-PRE, and thus not yield a linear time variant of MC-PRE. Also it would not simplify the MC-PRE algorithm, and not allow us to extend lospre to better handle lifetime-optimality. We therefore instead replace all of MC-PRE by an approach based on tree-decompositions.

\section{lospre in linear time}

Our approach uses dynamic programming~\cite{dynprog}, bottom-up along a nice tree-decomposition of minimum width of the CFG; as noted above, there are established methods for obtaining this tree-decomposition. Otherwise, our approach only uses elementary operations.

Let $G = (V, E, c, l)$ be the control-flow graph as in Section~\ref{problem}. Let $(T, \mathcal{X})$ be a nice tree-decomposition of minimum width of $G$ with root $t$.
Let $T_i$ be the set of nodes in bags in the subtree rooted at node $i$ of $T$, excluding the nodes in $\chi(i)$.
Let $S$ be the function that gives the minimum possible costs on edges and nodes covered in the subtree rooted at node $i$ of $T$, excluding edges and nodes in $\chi(i)$. $S$ depends on where in $\chi(i)$ the new temporary variable is alive, which is captured by the functions $f\colon \chi(i) \to \{0, 1\}$. For these $f$ (and any function from a  $X \subseteq V$ to $\{0, 1\}$), we define a local version of the calculation set:
\begin{gather*}
\mathcal{C}(\mathcal{U}, f, \mathcal{I}) \coloneqq \\ \left\{(x, y) \in E \ \middle|\ x \notin f^{-1}(1) \smallsetminus \mathcal{I}, y \in \mathcal{U} \cup f^{-1}(1)\right\}.
\end{gather*}
\begin{gather*}
T_i \coloneqq \\ \left\{v \in (\chi(j) \smallsetminus \chi(i))\ \middle|\ j \text{ in the subtree of } T \text{ rooted at } i\right\},\\
\end{gather*}
\begin{gather*}
S(i, f) \coloneqq \\ \min_{\substack{g\colon V \to \{0, 1\}\\g|_{\chi(i)} = f|_{\chi(i)}}} \left\{ \sum_{v \in (T_i \cap g^{-1}(1))}\!\!\!\! l(v) + \!\!\!\!\sum_{e \in \mathcal{C}(\mathcal{U}, g, \mathcal{I}) \cap T_i^2}\!\!\!\! c(e) \right\}.
\end{gather*}
At the root $t$ of $T$, this function $S$, and the corresponding $g$ are what we want, since this $g$ gives the life-set that is the solution to lospre:
\begin{multline*}
S(t, f) = 
\!\!\!\!\min_{\substack{g\colon V \to \{0, 1\}\\g|_{\chi(t)} = f|_{\chi(t)}}}\!\! \left\{ \sum_{v \in (T_t \cap g^{-1}(1))}\!\!\!\!\!\! l(v) + \!\!\!\! \sum_{e \in \mathcal{C}(\mathcal{U}, g, \mathcal{I}) \cap T_t^2}\!\!\!\! c(e) \right\} = \displaybreak[0]\\
\min_{\substack{g\colon V \to \{0, 1\}\\g|_\emptyset = f|_\emptyset}} \left\{ \sum_{v \in (g^{-1}(1))}l(v) + \sum_{e \in \mathcal{C}(\mathcal{U}, g, \mathcal{I})} c(e) \right\} = \displaybreak[0]\\
\min_{g\colon V \to \{0, 1\}} \left\{ \sum_{v \in (g^{-1}(1))}l(v) + \sum_{e \in \mathcal{C}(\mathcal{U}, g, \mathcal{I})} c(e) \right\} = \displaybreak[0]\\
\min_{\mathcal{L} \subseteq V} \left\{ \sum_{e \in \mathcal{C}(\mathcal{U}, \mathcal{L}, \mathcal{I})}c(e) + \sum_{v \in \mathcal{L}}l(v) \right\}.
\end{multline*}

To calculate $S$, we define a function $s$, and then proceed to show that $s = S$ and that $s$ can be computed in linear time.
We define $s$ inductively depending on the type of $i$:
\begin{itemize}
\item Leaf: $s(i, f) \coloneqq 0.$
\item Introduce node with child $j$: $s(i, f) \coloneqq s(j, f|_{\chi(j)}).$
\item Forget node with child $j, \chi(j) \smallsetminus \chi(i) = \{v\}$: $s(i, f) \coloneqq$
\begin{displaymath}
\min_{\substack{g\colon \chi(j) \to \{0, 1\}\\g|_{\chi(i)} = f}} \!\! \left\{s(j, g) + f(v)l(v) + \!\!\!\!\!\!\!\!\!\!\!\! \sum_{\substack{e \in \mathcal{C}(\mathcal{U}, f, \mathcal{I}) \cap \\ ((\{v\} \times V) \cup (V \times \{v\}))}} \!\!\!\!\!\!\!\!\!\!\!\! c(e)\right\}.
\end{displaymath}
\item Join node with children $j_1$ and $j_2$: $s(i, f) \coloneqq s(j_1, f) + s(j_2, f).$
\end{itemize}

\begin{lemma}\label{correctnesslemma}
$s(i, f)$ gives the minimum possible costs on edges and nodes covered in the subtree rooted at node $i$ of $T$, excluding edges and nodes in $\chi(i)$, i.\,e.\ $s = S$.
\end{lemma}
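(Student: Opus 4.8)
\section*{Proof proposal for Lemma~\ref{correctnesslemma}}

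The plan is to prove $s = S$ by structural induction on the nice tree-decomposition $(T, \chi)$, working bottom-up from the leaves to the root $t$, i.\,e.\ in exactly the order in which the recursion for $s$ is evaluated. The induction hypothesis is that $s(j, \cdot) = S(j, \cdot)$ holds for every child $j$ of the node $i$ under consideration; the four node types of Definition~\ref{lospre:nicetreedec} are then checked one by one. The single structural fact that drives every case is the \emph{separator property} of tree-decompositions: for each node $i$ the bag $\chi(i)$ separates $T_i$ from $V \smallsetminus (T_i \cup \chi(i))$, that is, there is no edge with one endpoint in $T_i$ and the other outside $T_i \cup \chi(i)$. This is immediate from the edge-coverage and connectedness conditions of Definition~\ref{definition:tree-decomposition}, and it is what lets the global minimization over $g \colon V \to \{0,1\}$ defining $S$ be split into independent minimizations over the pieces of $g$ living in different subtrees, coupled only through their common values on $\chi(i)$, which are pinned down by $f$. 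I would state this property, together with the invariant that each calculation-set edge is charged exactly once (at the forget node that first removes one of its two endpoints from the bag, the other endpoint then still being present), as an explicit auxiliary claim before the case analysis, since all four cases lean on it.

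First the easy cases. For a leaf, $\chi(i) = \emptyset$ and $T_i = \emptyset$, so both sides are $0$. For an introduce node with child $j$ one has $T_i = T_j$, because the newly introduced bag vertices lie in $\chi(i)$ and hence, by connectedness, occur in no bag of the subtree below $j$; since neither the lifetime sum over $T_i$ nor the edge sum sees any vertex of $\chi(i)$, the value of $g$ on the introduced vertices is irrelevant, so $S(i, f) = S(j, f|_{\chi(j)}) = s(j, f|_{\chi(j)})$ by the hypothesis. For a join node with children $j_1, j_2$, the separator property gives that $T_{j_1}$ and $T_{j_2}$ are disjoint with $T_i = T_{j_1} \cup T_{j_2}$ and no edge running between them; both the lifetime and the edge contributions therefore split into a $j_1$-part and a $j_2$-part, and because $f$ fixes the shared values on $\chi(i) = \chi(j_1) = \chi(j_2)$ the two inner minimizations are independent, yielding $S(i,f) = S(j_1, f) + S(j_2, f)$, again matching $s$.

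The forget node is the heart of the argument and the step I expect to be the main obstacle. Here $\chi(j) = \chi(i) \cup \{v\}$ and $T_i = T_j \cup \{v\}$ with $v \notin T_j$, so in passing from $j$ to $i$ the vertex $v$ moves from the bag into the already-handled region. I would split the defining minimization of $S(i, f)$ according to the value $b \coloneqq g(v) \in \{0,1\}$, writing each admissible $g$ as the extension of some $g' \colon \chi(j) \to \{0,1\}$ with $g'|_{\chi(i)} = f$ and $g'(v) = b$. Relative to $S(j, g')$ the objective then gains exactly two terms: the lifetime $b\,l(v)$ of the newly handled vertex, and the cost of the calculation-set edges charged at this step. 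The delicate point is the edge bookkeeping: by the charging invariant the edges newly charged when $v$ leaves the bag are precisely those incident to $v$ whose other endpoint still lies in $\chi(i)$ (edges to vertices of $T_j$ were charged earlier, when that endpoint left the bag with $v$ still present, and the separator property forbids edges from $v$ to $V \smallsetminus (T_j \cup \chi(j))$). Each such edge has both endpoints in $\chi(j)$, so its membership in $\mathcal{C}(\mathcal{U}, g, \mathcal{I})$ is determined by $g' = g|_{\chi(j)}$ alone; this term can therefore be pulled out of the inner minimization as a constant depending only on $f$ and $b$. What remains inside is exactly the minimization defining $S(j, g')$, which equals $s(j, g')$ by the induction hypothesis, and minimizing over $b$ reproduces the recursive definition of $s(i, f)$.

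To finish I would assemble the four cases into the induction and instantiate at the root $t$, where $\chi(t) = \emptyset$ and $T_t = V$; the chain of equalities displayed just before the lemma then identifies $S(t, \cdot) = s(t, \cdot)$ with the minimum over all life sets $\mathcal{L} \subseteq V$, so that evaluating $s$ bottom-up indeed solves lospre. The only real subtlety to guard against throughout is the consistency of the edge-charging convention across the forget and join steps, i.\,e.\ making sure that no calculation-set edge is either double-counted or dropped; fixing the separator property and the once-charging invariant up front is what makes that verification routine in each individual case.
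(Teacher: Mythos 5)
Your proof is correct, and it follows the same overall strategy as the paper's: structural induction over the four node types of the nice tree-decomposition, with leaf, introduce and join nodes handled via disjointness and separation, and the forget node carrying the real content. Where you genuinely depart from the paper is in the edge bookkeeping, and your version is in fact the one under which the induction actually closes. The paper's displayed formulas are not self-consistent: its definition of $S$ counts only edges in $\mathcal{C}(\mathcal{U}, g, \mathcal{I}) \cap T_i^2$ (both endpoints already forgotten), while its forget-node recursion adds \emph{all} edges incident to the forgotten vertex $v$, filtered by $\mathcal{C}(\mathcal{U}, f, \mathcal{I})$ with $f$ defined only on $\chi(i)$ (so $f(v)$, and the filter on edges running into $T_j$, are not even well defined as written). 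Under the literal definition of $S$, the edges newly counted at a forget node would be those between $v$ and $T_j$, whose membership in the calculation set depends on values of $g$ inside $T_j$ that $S(j,\cdot)$ has already minimized away, so $S(i,f)$ is not determined by the values $S(j,\cdot)$ alone. Your convention --- each edge is charged exactly once, at the forget node where its first endpoint leaves the bag, the other endpoint then still lying in the bag by the connectedness condition of Definition~\ref{definition:tree-decomposition} --- ensures that every newly charged edge has both endpoints in $\chi(j)$, so its membership in the calculation set is a constant of the inner minimization and can be pulled out, exactly as you argue; it also matches the paper's prose (``adding the costs for the edges between $\chi(i)$ and $v$'') and agrees with $S$ at the root, where $\chi(t) = \emptyset$ and $T_t = V$, so the use of the lemma in Theorem~\ref{lospretheorem} is unaffected. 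In short: same induction as the paper, but your explicit separator property and once-charging invariant repair a sloppiness in the paper's own formulas rather than merely reproducing them, which is precisely the added value of stating those two auxiliary claims up front.
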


\begin{proof}
By induction we can assume that the lemma is true for all children of node $i$ of $T$.

Case 1: $i$ is a leaf. There are no edges or nodes in the subgraph of $G$ induced by $T_i = \chi(i) \smallsetminus \chi(i) = \emptyset$, thus the cost is zero:
\begin{displaymath}
s(i, f) = 0 = S(i, f).
\end{displaymath}

Case 2: $i$ is an introduce node with child $j$. $T_i = T_j$, since $\chi(i) \subseteq \chi(j)$, thus the cost remains the same:
\begin{displaymath}
s(i, f) = s(j, f) = S(j, f) = S(i, f).
\end{displaymath}

Case 3: $i$ is a forget node with child $j$. $T_i = T_j \cup (\chi(j) \smallsetminus \chi(i)) = T_j \cup \{v\}$, the union is disjoint. Thus we get the correct result by adding the costs for the edges between $\chi(i)$ and $v, \{v\} = (\chi(j) \smallsetminus \chi(i))$ and the lifetime cost for $v$ itself:
\begin{gather*}
s(i, f) = \min_{\substack{g\colon \chi(j) \to \{0, 1\}\\g|_{\chi(i)} = f}}\left\{s(j, g) + f(v)l(v) + \!\!\!\!\!\!\!\!\!\!\!\! \sum_{\substack{e \in \mathcal{C}(\mathcal{U}, f, \mathcal{I}) \cap \\ ((\{v\} \times E) \cup (E \times \{v\}))}} \!\!\!\!\!\!\!\!\!\!\!\! c(e)\right\} = \displaybreak[0]\\
\min_{\substack{g\colon \chi(j) \to \{0, 1\}\\g|_{\chi(i)} = f}}\left\{S(j, g) + f(v)l(v) + \!\!\!\!\!\!\!\!\!\!\!\! \sum_{\substack{e \in \mathcal{C}(\mathcal{U}, f, \mathcal{I}) \cap \\ ((\{v\} \times E) \cup (E \times \{v\}))}} \!\!\!\!\!\!\!\!\!\!\!\! c(e)\right\} = \displaybreak[0]\\
\min_{\substack{g\colon \chi(j) \to \{0, 1\}\\g|_{\chi(i)} = f}}\left\{\min_{\substack{h\colon V \to \{0, 1\}\\h|_{\chi(j)} = g|_{\chi(j)}}} \left\{ \sum_{u \in (T_j \cap h^{-1}(1))}\!\!\!\! l(u) + \!\!\!\!\sum_{e \in \mathcal{C}(\mathcal{U}, h, \mathcal{I}) \cap T_j^2}\!\!\!\! c(e) \right\} + \right.\\\left. f(v)l(v) + \!\!\!\!\!\!\!\!\!\!\!\! \sum_{\substack{e \in \mathcal{C}(\mathcal{U}, f, \mathcal{I}) \cap \\ ((\{v\} \times E) \cup (E \times \{v\}))}} \!\!\!\!\!\!\!\!\!\!\!\! c(e)\right\} = \displaybreak[0]\\
\min_{\substack{g\colon V \to \{0, 1\}\\g|_{\chi(i)} = f}}\left\{\sum_{u \in (T_j \cap g^{-1}(1))}\!\!\!\! l(u) + \!\!\!\!\sum_{e \in \mathcal{C}(\mathcal{U}, g, \mathcal{I}) \cap T_j^2}\!\!\!\! c(e) + \right.\\\left. f(v)l(v) + \!\!\!\!\!\!\!\!\!\!\!\! \sum_{\substack{e \in \mathcal{C}(\mathcal{U}, f, \mathcal{I}) \cap \\ ((\{v\} \times E) \cup (E \times \{v\}))}} \!\!\!\!\!\!\!\!\!\!\!\! c(e)\right\} = \displaybreak[0]\\
\min_{\substack{g\colon V \to \{0, 1\}\\g|_{\chi(i)} = f|_{\chi(i)}}} \left\{ \sum_{v \in (T_i \cap g^{-1}(1))}\!\!\!\! l(v) + \!\!\!\!\sum_{e \in \mathcal{C}(\mathcal{U}, g, \mathcal{I}) \cap T_i^2}\!\!\!\! c(e) \right\} = S(i, f).
\end{gather*}

Case 4: $i$ is a join node with children $j_1$ and $j_2$. $T_i = T_{j_1} \cup T_{j_2}$, since $\chi(i) = \chi(j_1) = \chi(j_2)$. The union is disjoint and there are no edges between $T_{j_1}$ and $T_{j_2}$ in G. Thus we get the correct result by adding the costs from both subtrees:
\begin{displaymath}
s(i, f) = s(j_1, f) + s(j_2, f) = S(j_1, f) + S(j_2, f) = S(i, f).
\end{displaymath}
\end{proof}

\begin{lemma}\label{timelemma}
Given a nice tree-decomposition $(T, \mathcal{X})$ of minimum width of $G$, $s$ can be calculated in time $O(\tw(G)2^{\tw(G)}|T|)$.
\end{lemma}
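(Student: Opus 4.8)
The plan is to bound the total running time by summing, over all $|T|$ nodes of the tree-decomposition, the cost of computing the table $s(i, \cdot)$ from the values already available at the children, processing the nodes bottom-up. The central observation is that each such table has few entries: since $f$ ranges over all functions $\chi(i) \to \{0,1\}$ and $|\chi(i)| \le \tw(G) + 1$ by definition of the width, there are at most $2^{\tw(G)+1} = O(2^{\tw(G)})$ distinct $f$. I would store these values in a table indexed by the bit-vector representing $f$, so that each value can be read or written in $O(\tw(G))$ time, and each restriction or extension of an $f$ can be formed in $O(\tw(G))$ time.

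Next I would go through the four node types of Definition~\ref{lospre:nicetreedec} and bound the work spent per function $f$. For a leaf, $s(i,f)=0$ is immediate, costing $O(1)$. For an introduce node the value is the single lookup $s(j, f|_{\chi(j)})$; forming the restriction and reading the child's table costs $O(\tw(G))$. For a join node we add the two child entries $s(j_1,f)+s(j_2,f)$, again $O(\tw(G))$ including the two lookups. The only case requiring a genuine minimisation is the forget node: there $\chi(j) = \chi(i) \cup \{v\}$, so the functions $g$ with $g|_{\chi(i)} = f$ are exactly the two extensions given by $g(v) \in \{0,1\}$, and the minimum is taken over these two candidates only, each requiring one lookup of $s(j,g)$.

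The point that needs care is the edge sum appearing at a forget node, namely $\sum_{e \in \mathcal{C}(\mathcal{U}, f, \mathcal{I}) \cap ((\{v\} \times V) \cup (V \times \{v\}))} c(e)$, which ranges over edges of $G$ incident to the forgotten vertex $v$. I would argue that only edges joining $v$ to a vertex of $\chi(i)$ can contribute: by the connectivity and edge conditions of Definition~\ref{definition:tree-decomposition}, every neighbour $w$ of $v$ must share a bag with $v$, and since the bags containing $v$ form a connected subtree whose topmost node is $j$ (because $v$ is forgotten at $i$), any such common bag lying in the subtree rooted at $i$ must be $\chi(j)$; hence $w \in \chi(j) \smallsetminus \{v\} = \chi(i)$. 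Therefore at most $|\chi(i)| \le \tw(G)$ vertices, and thus $O(\tw(G))$ directed edges, enter the sum, which can consequently be evaluated in $O(\tw(G))$ time. Taking the minimum over the two extensions $g(v) \in \{0,1\}$ leaves the per-$f$ cost still at $O(\tw(G))$.

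Putting the cases together, the cost of processing one node of $T$ for all $f$ is at most $O(\tw(G))$ per function times the $O(2^{\tw(G)})$ functions, i.\,e.\ $O(\tw(G)2^{\tw(G)})$; summing over the $|T|$ nodes yields the claimed bound $O(\tw(G)2^{\tw(G)}|T|)$. I expect the main obstacle to be precisely the forget-node edge count: making rigorous, via the tree-decomposition axioms, that no edge incident to $v$ other than those to the current bag $\chi(i)$ can still be unaccounted for at this node, so that the sum really ranges over only $O(\tw(G))$ edges rather than over all of $v$'s incidences in $G$.
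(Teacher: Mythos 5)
Your high-level accounting --- a table of at most $2^{|\chi(i)|} \leq 2^{\tw(G)+1}$ entries per node, constant-or-$O(\tw(G))$ work per entry except at forget nodes, summed over $|T|$ nodes --- matches the paper's proof. But the structural claim you use to bound the forget-node edge sum, which you yourself flag as the crux, is false. You assert that every neighbour $w$ of the forgotten vertex $v$ lies in $\chi(i)$, on the grounds that any bag containing both $v$ and $w$ ``must be $\chi(j)$.'' Nothing forces this: the bags containing $v$ form a connected subtree whose topmost node is indeed $j$, but that subtree can extend arbitrarily far below $j$, and the bag witnessing the edge between $v$ and $w$ may lie anywhere in it. Concretely, let $G$ be a star with centre $v$ and leaves $w_1, \dots, w_{n-1}$ (tree-width $1$), with the nice tree-decomposition that introduces $v$ once at the bottom, then for each $k$ introduces and forgets $w_k$ inside the bag $\{v, w_k\}$, and forgets $v$ last. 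At the forget node for $v$ we have $\chi(j) = \{v\}$ and $\chi(i) = \emptyset$, yet $v$ has $n-1$ neighbours, none of them in $\chi(i)$; every common bag lies strictly below $j$. The true statement is only that each neighbour of $v$ lies in $\chi(i) \cup T_j$, and the edges into $T_j$ can number $\Omega(n)$, so your per-$f$ bound of $O(\tw(G))$ edges fails exactly at the step you identified as the main obstacle.

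The repair is not a sharper neighbourhood bound but a different reading of the recursion, which is what the paper (tersely, and with some abuse in its formulas) relies on when it asserts ``at most $2\tw(G)$ different edges'': the sum at the forget node ranges only over edges between $v$ and the current bag $\chi(j)$. Edges between $v$ and already-forgotten vertices $w \in T_j$ must not be inspected there at all --- they were accounted for earlier, at the forget node of $w$, where (by exactly the connectivity argument you use, applied at that lower node) $v$ still belonged to the child's bag. Indeed such edges could not be processed at $v$'s forget node even in principle: membership of an edge between $v$ and $w$ in $\mathcal{C}(\mathcal{U}, g, \mathcal{I})$ depends on the value of $g$ at $w$, which has already been minimised away. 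With the sum so restricted, the count is immediate --- at most $2|\chi(i)| + 1 \leq 2\tw(G) + 1$ directed edges per extension $g$ --- and the rest of your bookkeeping yields the claimed $O(\tw(G) 2^{\tw(G)} |T|)$. (One could instead keep ``all edges incident to $v$'' and amortise, since $\sum_v \deg(v) = O(\tw(G)|V|)$ in graphs of bounded tree-width; but that only rescues the running time, not the consistency of the recursion, so the restricted reading is the one to adopt.)
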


\begin{proof}
At each node $i$ of $T$ time $O(\tw(G)2^{\tw(G)})$ is sufficient:

Case 1: $i$ is a leaf. There is only one function $f\colon \emptyset \to \{0, 1\}$.

Case 2: $i$ is an introduce node. There are at most $2^{|\chi(i)|} \leq 2^{\tw(G) + 1}$ different $f$ and for each one we use constant time.

Case 3: $i$ is a forget node with child $j$. There are at most $2^{|\chi(i)|}$ different $f$ and for each one we need to consider at most $2^{|\chi(j) \smallsetminus \chi(i)|}$ different $g$ and for each $g$ we need to consider at most $2\tw(G)$ different edges. Thus the total time is in $O(2^{|\chi(i)|} 2^{|\chi(j) \smallsetminus \chi(i)|} 2\tw(G)) = O(\tw(G)2^{|\chi(j)|}) \subseteq O(\tw(G)2^{\tw(G) + 1}) = O(\tw(G)2^{\tw(G)})$.

Case 4: $i$ is a join node. The reasoning from case 2 holds.
\end{proof}

\begin{theorem}\label{lospretheorem}
lospre can be done in linear time for structured programs.
\end{theorem}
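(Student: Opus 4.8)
The plan is to assemble the theorem directly from the two lemmas just proved, together with the defining property of structured programs — that their control-flow graphs have tree-width bounded by the fixed constant $\mathfrak{k}$ — and the known linear-time preprocessing for control-flow graphs. First I would fix the structured program and its CFG $G = (V, E, c, l)$, write $n \coloneqq |V|$, and recall that $\tw(G) \leq \mathfrak{k}$ by the definition of $\mathfrak{k}$-structuredness, so in particular $2^{\tw(G)} \leq 2^{\mathfrak{k}}$ is a constant.

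The pipeline would then be threefold. First, obtain a tree-decomposition of $G$ of width $O(\mathfrak{k})$ in time $O(n)$, using one of the established constructions for control-flow graphs cited in Section~\ref{structured}. Second, convert it into a nice tree-decomposition $(T, \mathcal{X})$ of the same width in linear time, done so that $|T| = O(n)$. Third, run the dynamic program computing $s$ bottom-up along $T$. By Lemma~\ref{timelemma} this last step costs $O(\tw(G)\,2^{\tw(G)}\,|T|)$; since $\tw(G) \leq \mathfrak{k}$ is a fixed constant, the factor $\tw(G)\,2^{\tw(G)}$ is constant and the step is $O(|T|) = O(n)$. By Lemma~\ref{correctnesslemma} we have $s = S$, and the chain of equalities already established for $S(t, f)$ at the root $t$ shows that $s(t, f)$ equals the minimum lospre cost $\min_{\mathcal{L} \subseteq V}\left(\sum_{e \in \mathcal{C}(\mathcal{U}, \mathcal{L}, \mathcal{I})} c(e) + \sum_{v \in \mathcal{L}} l(v)\right)$. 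Summing the three linear contributions gives overall runtime $O(n)$, which is linear in the number of nodes of the CFG.

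The step requiring the most care — and the one I would flag as the main obstacle — is that lospre asks for an optimal life set $\mathcal{L}$, whereas the recursion as written returns only the optimal \emph{cost} $s(t, \cdot)$. I would therefore augment the bottom-up pass to record, at each forget node and for each argument $f$, a minimizing extension $g$ of $f$ (the value realizing the inner minimum in the definition of $s$ at a forget node); introduce and join nodes contribute no genuine choice, only propagation of the current $f$ to their child or children. A single top-down traversal of $T$ from the root then reconstructs a consistent global assignment $g\colon V \to \{0,1\}$ attaining $S(t, \cdot)$, and $\mathcal{L} \coloneqq g^{-1}(1)$ is the desired life set. Since each node stores only $O(2^{\tw(G)}) = O(1)$ witnesses and the top-down pass visits each node once, this bookkeeping adds only $O(n)$ time and space, so both optimality and linearity are preserved. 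The remaining points are routine to verify: that the cited CFG decomposition methods indeed run in linear time for bounded width, and that the standard transformation to a nice decomposition keeps both the width $O(\mathfrak{k})$ and $|T| = O(n)$.
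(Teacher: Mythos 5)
Your proposal is correct and takes essentially the same route as the paper's own proof: obtain a bounded-width tree-decomposition in linear time, convert it to a nice one of the same width (so $|T|$ is linear in $|V|$), run the dynamic program whose cost is bounded by Lemma~\ref{timelemma}, and read off optimality at the root via Lemma~\ref{correctnesslemma}. The witness-reconstruction step you flag as the main obstacle and work out explicitly is exactly what the paper compresses into the phrase ``standard book-keeping techniques,'' so the two arguments coincide in substance.
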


\begin{proof}
Given an input program of bounded tree-width we can calculate a tree-decomposition of minimum width in linear time \cite{bodlaenderalg}. We can then transform this tree-decomposition into a nice one of the same width. The linear time for these steps implies that $|T|$ is linear in $|V|$. We then calculate the $s$ as in Lemma \ref{timelemma} above in linear time. Using standard book-keeping techniques we can keep track of which $g$ corresponds to each $f$. The one remaining $s(t, f)$ at the root $t$ of $T$ then gives us the minimum total cost according to Lemma \ref{correctnesslemma}. The corresponding $g^{-1}(1) = \mathcal{L}$ is the solution.
\end{proof}

The safety problem can be solved by a similar approach. This time $f$ denotes which nodes of $G$ are to be added to $\mathcal{I}$ to get $\mathcal{I}'$. We use cost values in $\mathcal{K} = \mathbb{Z} \cup \{\infty\}$. The function $s$ is defined the same as above except for forget nodes. For a forget node $i$, with child $j, \{v\} = \chi(j) \smallsetminus \chi(i)$, it is defined as follows:

\begin{multline*}
s(i, f) \coloneqq\\
\min\left\{s(j, g) \ \middle|\ g|_{\chi(i)} = f\right\} +\\
\begin{cases}
0\text{ if } f(v) = 0\\
\infty\text{ if } f(v) = 1, v \in \mathcal{U}\\
\infty\text{ if } f(v) = 1,\text{ no successor of }v \text{ is in }f^{-1}(1) \cup (\mathcal{I} \setminus \mathcal{U})\\
\infty\text{ if } f(v) = 1,\text{ no predecessor of }v \text{ is in }f^{-1}(1) \cup \mathcal{I}\\
- 1\text{ otherwise}.
\end{cases}
\end{multline*}

With a proof very similar to the one for the previous theorem, we get:

\begin{theorem}
The safety\index{safety} problem can be solved in linear time for structured programs.
\end{theorem}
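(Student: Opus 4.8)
The plan is to reuse the machinery of Lemma~\ref{correctnesslemma} and Lemma~\ref{timelemma} almost verbatim, changing only the quantity being optimized and the forget case. First I would record a structural observation that pins down the target: since the family of $\mathcal{U}$-avoiding paths between two nodes of $\mathcal{I}$ depends only on the fixed sets $\mathcal{I}$ and $\mathcal{U}$, the minimal $\mathcal{I}'$ is the unique set $\mathcal{I} \cup \{v : v \text{ is an internal node of some such path}\}$. Thus ``minimal size'' is a single target set that the dynamic program must reconstruct, and the interplay of the $-1$ reward with the three $\infty$ penalties is engineered precisely so that the minimizer commits a node to $\mathcal{I}'$ exactly when it genuinely lies on such a path. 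In analogy with $S$, I would define an optimum that records the best attainable value over the nodes and edges of $T_i$, given that the membership of the bag nodes in $\mathcal{I}'$ is fixed by $f\colon \chi(i) \to \{0,1\}$ (with $f^{-1}(1)$ the bag nodes added to $\mathcal{I}$).

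Second, I would prove the analogue of Lemma~\ref{correctnesslemma} by the same induction over the four node types. The leaf, introduce and join cases are identical, because the definition of $s$ is unchanged there; in particular the join case still relies on $T_{j_1}$ and $T_{j_2}$ being disjoint with no edges between them. All the new content sits in the forget case, where I must check that the five-way case distinction correctly accounts for the forgotten node $v$: that $f(v)=0$ contributes nothing; that the three $\infty$ branches rule out exactly the placements of $v$ into $\mathcal{I}'$ that are either unnecessary (the case $v \in \mathcal{U}$, since a $\mathcal{U}$ node can never be an internal node of a $\mathcal{U}$-avoiding path) or cannot be completed into a dangerous path through $v$ (no admissible successor in $f^{-1}(1) \cup (\mathcal{I}\smallsetminus\mathcal{U})$, or no admissible predecessor in $f^{-1}(1) \cup \mathcal{I}$); and that the $-1$ branch rewards every legitimate placement, so that the minimizer includes every node that can lie on a dangerous path.

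The main obstacle is exactly this forget case, and it hinges on the separator property of tree-decompositions. Because $v$ is being forgotten at $i$, every edge incident to $v$ has its other endpoint inside $\chi(j)$, and any dangerous path that enters or leaves $T_i$ must cross the bag $\chi(i)$; consequently the purely local predecessor/successor tests against the interface $f$ and against the global sets $\mathcal{I}, \mathcal{U}$ suffice to certify, or to forbid, a continuation of the path outside the current subtree. I would make this precise by showing that a $\mathcal{U}$-avoiding $\mathcal{I}$-to-$\mathcal{I}$ path through $v$ exists if and only if $v$ has, on each side, a neighbour that is either already committed to $\mathcal{I}'$ (captured by $f^{-1}(1)$) or is an admissible $\mathcal{I}$-endpoint, which is what the two $\infty$ neighbourhood conditions encode. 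The delicate point is that the asymmetry between the successor condition (which excludes $\mathcal{U}$) and the predecessor condition (which does not) must be matched exactly to the intended reading of which nodes a dangerous path is allowed to contain; establishing this equivalence rigorously, while keeping the bookkeeping of which $g$ realizes each $f$, is where the care is needed.

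Finally, the running-time analysis of Lemma~\ref{timelemma} carries over unchanged. The only modification is in the forget case, where for each $f$ and each admissible extension $g$ we additionally inspect the neighbours of $v$; since all such neighbours lie in $\chi(j)$, which has at most $\tw(G)+1$ elements, this adds only $O(\tw(G))$ work per candidate and leaves the per-node bound at $O(\tw(G)2^{\tw(G)})$ and the total at $O(\tw(G)2^{\tw(G)}|T|)$. Combining this with the linear-time computation of a nice tree-decomposition of minimum width, exactly as in the proof of Theorem~\ref{lospretheorem}, yields the claimed linear time bound for structured programs.
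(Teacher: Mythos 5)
Your overall plan is indeed the one the paper intends: the paper proves this theorem only by remarking that the argument is ``very similar'' to Theorem~\ref{lospretheorem}, so the real content must be an analogue of Lemma~\ref{correctnesslemma} for the new forget rule plus the unchanged timing argument of Lemma~\ref{timelemma}. The problem is that the two concrete claims on which your forget-node case rests are both false, and each is a genuine gap rather than an omitted routine detail. First, the separator claim that ``every edge incident to $v$ has its other endpoint inside $\chi(j)$'' is wrong: the neighbours of the forgotten vertex $v$ lie in $T_j \cup \chi(j)$, not in $\chi(j)$, and a neighbour forgotten lower in the tree is invisible to $f$, to every $g$ on $\chi(j)$, and to any record of which forgotten vertices were added to $\mathcal{I}'$. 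So the five-way case distinction cannot be evaluated from the state your induction carries, and your $s = S$ analogue fails. Concretely, take the CFG $x \to a \to b \to y$ with $\mathcal{I} = \{x, y\}$ (source and sink) and $\mathcal{U} = \emptyset$. Both $a$ and $b$ lie on a $\mathcal{U}$-avoiding $\mathcal{I}$-to-$\mathcal{I}$ path, so every feasible $\mathcal{I}'$ contains both. But in any nice tree-decomposition one of $a, b$ is forgotten strictly before the other, and at the later forget the earlier vertex --- the only candidate certificate on one side, since $b$ is the unique successor of $a$ and $a$ the unique predecessor of $b$ --- is in no bag; the corresponding $\infty$ test fires, every assignment adding both vertices costs $\infty$, and the recursion returns $\mathcal{I}' = \{x, y\}$, which is not even feasible. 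Repairing this requires a richer DP state (e.g.\ two extra bits per bag vertex recording whether a predecessor- or successor-certificate has already been supplied by forgotten neighbours), and the correctness induction must be redone for that state.

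Second, the equivalence you plan to prove --- a $\mathcal{U}$-avoiding $\mathcal{I}$-to-$\mathcal{I}$ path through $v$ exists if and only if $v$ has a predecessor in $f^{-1}(1) \cup \mathcal{I}$ and a successor in $f^{-1}(1) \cup (\mathcal{I} \smallsetminus \mathcal{U})$ --- fails in the ``if'' direction, because vertices in $f^{-1}(1)$ can certify one another cyclically without any of them lying on such a path, and the $-1$ reward makes the minimizer seek out exactly such sets. Take $r \to u$, $u \to v_1$, $v_1 \to v_2$, $v_2 \to v_1$, $v_1 \to y$ with $\mathcal{I} = \{r, y\}$ and $\mathcal{U} = \{u\}$. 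Every $\mathcal{I}$-to-$\mathcal{I}$ path passes through $u$, so the minimal $\mathcal{I}'$ is $\{r, y\}$; yet $g^{-1}(1) = \{v_1, v_2\}$ passes all three $\infty$ tests ($v_1$ and $v_2$ certify each other, and $v_1$ has successor $y \in \mathcal{I} \smallsetminus \mathcal{U}$) and costs $-2$, so even a correctly implemented minimization adds both vertices and outputs a set of size $4$, violating the minimality that the problem definition demands. What the recurrence computes is the largest locally certified set, not $\mathcal{I}$ plus the set of vertices on dangerous paths, and your proof would need --- and cannot supply --- an argument that optimal solutions contain no certification cycles. Both defects are arguably inherited from the paper's own one-line sketch, but your write-up commits to them explicitly. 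A correct treatment must either carry reachability information in the DP state, or observe that the minimal $\mathcal{I}'$ is $\mathcal{I}$ together with the vertices lying both on a $\mathcal{U}$-avoiding walk from $\mathcal{I}$ and on one to $\mathcal{I}$ (with the endpoint conventions your asymmetry discussion identifies), a set computable by two linear-time graph searches on any CFG, with no tree-decomposition needed at all.
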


Together with the previous theorem, this allows us to do lospre in linear time, even when safety is required.

\section{Implementation}

We implemented our approach to lospre in SDCC \cite{sdcc}, since SDCC has infrastructure for handling tree-decompositions due to its tree-decomposition based register allocator~\cite{KrauseRalloc} and bank selection~\cite{Naddr}.
SDCC is a C compiler for embedded systems, targeting the MCS-51, DS390, DS400, HC08, S08, Z80, Z180, Rabbit 2000/3000, Rabbit 3000A, LR35902, TLCS-90, STM8, PIC14 and PIC16  architectures. Our implementation ships in SDCC since version 3.3.0 released in May 2013. The code can be found in the SDCC project's public source code repository.

The implementation minimizes the total number of computations in the three-address code (corresponding to optimizing for code size), using $\mathcal{K} = \mathbb{Z}^2$ with lexicographical ordering, $w(e) = (1, 0)$ and $l = (0, 1)$. It does not yet take information from pointer analysis into account, and thus requires safety for all pointer reads (as we cannot rule out that a pointer points to some memory-mapped I/O on a path where it is not normally read). The tree-decomposition is computed once per CFG and then reused for all expressions. Initially, Thorup's heuristic \cite{Thorup1998} was used to obtain the tree-decomposition. Since SDCC 3.7.0, Thorup's heuristic has been replaced by a different approach~\cite{Ctree}.
As a benchmark, we compiled the Contiki\index{Contiki} operating system \cite{contiki}, version 2.5, consisting of 1083 C functions.

To measure the impact of lospre on compiled programs, we first counted the number of eliminated computations when using no other global redundancy elimination technique. To measure the advantage over the current GCSE implementation in SDCC, we also counted the number of computations eliminated by our lospre implementation when GCSE was run on the programs first. When using lospre as an additional compiler stage after GCSE, lospre was able to reduce the number of computations by 543. With GCSE disabled, lospre was able to reduce the number of computations by 1311; when the safety requirement on reads from calculated addresses was dropped, these numbers increased to 561 and 1329. This shows that even in its current state, our lospre implementation provides a significant advantage over the GCSE implementation used in SDCC. For further data on how lospre can improve a program, we refer the reader to the extensive experimental evaluation using the MC-PRE~\cite{MC-PRE} (e.\,g.\ eliminating $90.13\%$ more non-full redundancies in SPECint2000 compared to LCM, speedup of over $7\%$ compared to LCM in the sixtrack SPECint2000 benchmark) and MC-SSAPRE algorithms~\cite{MC-SSAPRE}; the flexible handling of lifetime costs using the function $l$ in our approach offers the potential for improvements over what MC-PRE and MC-SSAPRE can do.

We used the Callgrind tool of Valgrind \cite{Valgrind} to measure the part of compilation time spent in our lospre implementation. Again the numbers are from compiling the Contiki operating system. Only 1.75\% of the total compilation time was spent in lospre. This is a very low overhead, especially considering that our implementation is not optimized for compilation speed. Speed could easily be improved further, e.\,g.\ by parallelization or by working on a CFG that uses basic blocks as nodes. Of the time spent in our implementation, about 66\% were spent in our lospre algorithm, 29\% in our safety algorithm, and the rest on other tasks, such as obtaining a tree-decomposition and transforming it into a nice tree-decomposition.

The compilation speed speed could be improved further by using working on the block graph and by parallelizing it. It could be improved by using information from pointer analysis. Also, the current implementation in SDCC only provides a complexity advantage over the previously known MC-PRE and MC-SSAPRE algorithms; it will be interesting to see the impact from using our approach in its full generality, i.\,e.\ the possibility to have more complex cost functions, which take register pressure into account and allow a trade-off between computation costs and lifetime costs; the next section will allow a first glimpse on this.

\section{Extending lifetime optimality}

Traditionally, the property of being lifetime optimal only referred to minimization of the life-times of newly introduced temporary variables. However, considering further aspects, such as register pressure and potential reductions in life-times of other variables, can result in better code. In this section we show how to extend our approach to handle these aspects, at the cost of increasing the runtime by a small constant factor; this also serves as an example for other similar extensions.
For calculations that have local variables as operands, we introduce $\mathcal{L}_l$ as the life-time of the left operand and $\mathcal{L}_r$ as the life-time of the right operand, and adjust the cost function accordingly:

\begin{displaymath}
\sum_{e \in \mathcal{C}(\mathcal{U}, \mathcal{L}, \mathcal{I})}c(e) + \sum_{v \in V}l\left(v, \mathcal{L} \cap \{v\}, \mathcal{L}_l \cap \{v\}, \mathcal{L}_r \cap \{v\}\right)
\end{displaymath}

To account for the two additional sets we have to handle we also adjust the functions $f$ and $g$ to give values in $\{0, 1\}^3$ instead of $\{0, 1\}$.

\begin{theorem}
lospre can be done in linear time for structured programs, even under the extended meaning of lifetime-optimality.
\end{theorem}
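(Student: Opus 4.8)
The plan is to mirror the proof of Theorem~\ref{lospretheorem}, keeping its entire structure intact and changing only the codomain of the state functions. First I would redefine $f$ and $g$ to map into $\{0,1\}^3$, writing a label as a triple $f = (f_{\mathcal{L}}, f_{\mathcal{L}_l}, f_{\mathcal{L}_r})$ recording, for each node, membership in $\mathcal{L}$, in $\mathcal{L}_l$ and in $\mathcal{L}_r$. The quantity $S(i,f)$ is then defined exactly as before, except that the node-cost term $\sum_v l(v)$ is replaced by $\sum_v l(v, \mathcal{L}\cap\{v\}, \mathcal{L}_l\cap\{v\}, \mathcal{L}_r\cap\{v\})$, while the edge-cost term $\sum_{e \in \mathcal{C}(\mathcal{U}, g, \mathcal{I})} c(e)$ is left untouched, with $\mathcal{C}$ now reading off only the first component $g_{\mathcal{L}}$ of $g$ (the calculation set still depends solely on the life set $\mathcal{L}$). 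The inductive definition of $s$ carries over unchanged on leaf, introduce and join nodes; on a forget node with forgotten vertex $v$ I would replace the summand $f(v)l(v)$ by $l(v, f_{\mathcal{L}}(v), f_{\mathcal{L}_l}(v), f_{\mathcal{L}_r}(v))$, keeping the incident-edge sum identical.

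The key observation that makes the dynamic programming go through is that the extended objective is still \emph{separable} into purely local contributions: each edge $e$ contributes $c(e)$ depending only on the $\mathcal{L}$-labels of its two endpoints, and each node $v$ contributes a term depending only on the three labels at $v$ itself. This locality is exactly what the original argument exploited, and replacing a scalar label by a triple does not break it, because the full triple at $v$ is determined at the moment $v$ is forgotten. Consequently the four cases of Lemma~\ref{correctnesslemma} go through mutatis mutandis; the only one requiring attention is the forget case, where I would recheck that the decomposition $T_i = T_j \cup \{v\}$ (a disjoint union) together with the locality of $l$ lets us add $v$'s full node cost and its incident edge costs precisely as before, now with the three-component label at $v$.

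For the running time I would re-run Lemma~\ref{timelemma}. The only change is that the number of state functions on a bag grows from $2^{|\chi(i)|}$ to $\left(2^3\right)^{|\chi(i)|} = 8^{|\chi(i)|} \le 8^{\tw(G)+1}$, and in the forget case we range over at most $8^{|\chi(j)\smallsetminus\chi(i)|} = 8$ extensions of $f$ to the forgotten vertex. Hence the work per node of $T$ becomes $O(\tw(G)\,8^{\tw(G)})$ and the total is $O(\tw(G)\,8^{\tw(G)}\,|T|)$. Since structured programs have bounded tree-width, $8^{\tw(G)}$ is a constant, and, exactly as in Theorem~\ref{lospretheorem}, $|T|$ is linear in $|V|$; therefore the whole computation is linear and the theorem follows.

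The asymptotics here are only a constant-factor restatement, so the main point needing genuine care is modelling rather than complexity: one must ensure the minimization cannot choose $\mathcal{L}_l$ and $\mathcal{L}_r$ inconsistently with the actual operand life-times. I would handle this the same way the safety extension handled infeasible choices, by letting $l$ take the value $\infty$ on locally inconsistent triples (for example a node declared to keep an operand live that neither defines it nor has a neighbour keeping it live). Because every such constraint is again local to a single node or edge, it folds into $l$ (respectively $c$) without disturbing separability, so both the correctness and the timing arguments above remain valid.
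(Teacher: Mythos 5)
Your proposal is correct and takes essentially the same route as the paper: the paper's own proof simply observes that the argument for Theorem~\ref{lospretheorem} carries over with $f$ and $g$ taking values in $\{0,1\}^3$, raising the per-bag state count from $2^{|\chi(i)|}$ to $(2^3)^{|\chi(i)|}$, which remains a constant for bounded tree-width. Your stated bound $O(\tw(G)\,8^{\tw(G)}\,|T|)$ differs cosmetically from the paper's $O((\tw(G)2^{\tw(G)} + 8^{\tw(G)})|T|)$ but is equally sufficient for linearity, and your extra remark on folding operand-lifetime consistency constraints into $l$ via $\infty$ values merely makes explicit a modelling detail the paper leaves implicit.
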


\begin{proof}
The proof is similar to the one for Theorem \ref{lospretheorem} above. The most notable difference is that compared to the proof of Lemma \ref{timelemma} we have to consider $(2^3)^{|\chi(i)|}$ different $f$ instead of just $2^{|\chi(i)|}$, resulting in total time $O((\tw(G)2^{\tw(G)} + 8^{\tw(G)})|T|)$ instead of $O(\tw(G)2^{\tw(G)}|T|)$ for the calculation of $s$. Since $\tw(G)$ is bounded, this is still linear time.
\end{proof}

We have started extending our implementation accordingly, using register pressure\index{register pressure} (i.\,e.\ the sum over the sizes of all variables alive at an instruction) for $l$. For some test cases we see improvements in the size and runtime of the generated code of up to $2.5$\%. Results also hint at potential for further improvement when taking the availability of registers in the target architecture into account, so this is what we want to investigate next.

\section{MC-PRE and MC-SSAPRE}

To be able to experimentally compare runtimes, we also implemented MC-PRE and ran both our algorithm and MC-PRE on the block graphs obtained from a large number of C functions. As expected, our algorithm and MC-PRE gave the same final results. However, we did not find the big difference in runtimes between our approach and MC-PRE that we expected: MC-PRE runtimes on the benchmarks we used were actually quite good, and we decided to look into the reasons.

In MC-PRE the most important step for the runtime is the weighted minimum cut, since all other steps can be done in quadratic time. A lot of research has gone into implementing weighted minimum cut efficiently. We used the implementation from the boost C++ libraries, which uses uses a deterministic push-relabel max flow algorithm, with highest-label node order, the variant that was found to be the fastest in comparisons~\cite{Cherkassky1994}, and is also used by the MC-PRE authors~\cite{MC-PRE}. For a graph of $n$ nodes and $m$ edges, it has complexity $O(n^2\sqrt{m})$, usually stated as $O(n^3)$. MC-PRE computes the weighted minimum cut on a graph $G_{st} = (N_{st}, E_{st}, W_{st})$. It is easy to show that for a control-flow graph $G = (N, E)$, we have $|N_{st}| \leq 2|N| + 2$ and $|E_{st}| \leq 2|E| + 4 |N|$. Often, $N_{st}$, and $E_{st}$ will be much smaller.
By Definition \ref{definition:tree-decomposition}, for each edge in the graph, there is a bag in the tree-decomposition that contains both ends. For fixed width, size of the tree-decomposition is linear in the size of the graph. We get that for control-flow graphs of bounded tree-width, the number of edges is at most the number of nodes times a constant factor, and thus:

\begin{theorem}
For structured programs, the runtime of MC-PRE implemented using a deterministic push-relabel max flow algorithm is in $O(n^{2.5})$.
\end{theorem}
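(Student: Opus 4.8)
The plan is to reduce the whole statement to two observations: that the auxiliary graph $G_{st}$ on which MC-PRE computes its weighted minimum cut is \emph{sparse} for structured programs (it has $O(n)$ nodes and $O(n)$ edges), and that the push-relabel bound $O(|N_{st}|^2\sqrt{|E_{st}|})$ collapses to $O(n^{2.5})$ once $|E_{st}| = O(n)$. Everything else is bookkeeping against bounds the excerpt has already supplied.

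First I would bound the number of edges of the CFG itself. Since the program is $\mathfrak{k}$-structured, fix a tree-decomposition $(T, \chi)$ of width at most $\mathfrak{k}$ with $|T|$ linear in $|N|$; such a decomposition exists and is computable in linear time, as noted in the excerpt. By Definition \ref{definition:tree-decomposition} every edge of the CFG has both endpoints in some common bag, and each bag has at most $\mathfrak{k}+1$ nodes and hence can host at most $\binom{\mathfrak{k}+1}{2}$ edges. Summing over the bags gives $|E| \leq \binom{\mathfrak{k}+1}{2}\,|T| = O(|N|)$, because $\mathfrak{k}$ is a fixed constant. This is the single place where the structuredness hypothesis is actually used.

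Next I would feed this into the stated size estimates for $G_{st}$. Writing $n = |N|$, we have $|N_{st}| \leq 2|N| + 2 = O(n)$, and, using $|E| = O(n)$ from the previous step, $|E_{st}| \leq 2|E| + 4|N| = O(n)$. Thus $G_{st}$ has $O(n)$ nodes and $O(n)$ edges, and substituting into the push-relabel complexity $O(|N_{st}|^2\sqrt{|E_{st}|})$ yields $O(n^2\sqrt{n}) = O(n^{2.5})$ for the minimum-cut step. Finally I would invoke the remark opening this section, that every step of MC-PRE other than the minimum cut runs in time $O(n^2)$, which is dominated by $O(n^{2.5})$; adding the steps therefore gives the claimed total runtime.

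I expect no genuine obstacle here: the only substantive ingredient is the edge-count bound $|E| = O(n)$, and even that is immediate from the tree-decomposition axioms once a linear-size decomposition of bounded width is fixed. The remainder is straightforward substitution into the two complexity bounds the paper has already established, so the main care required is merely to keep the roles of $n = |N|$ and of the auxiliary sizes $|N_{st}|, |E_{st}|$ consistent throughout.
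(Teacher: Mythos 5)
Your proposal is correct and follows essentially the same route as the paper: bound the CFG's edge count linearly via the tree-decomposition (every edge lies in some bag, bags have bounded size, and the decomposition has linearly many bags), plug this into the stated bounds $|N_{st}| \leq 2|N|+2$ and $|E_{st}| \leq 2|E|+4|N|$, substitute into the push-relabel complexity $O(|N_{st}|^2\sqrt{|E_{st}|})$ to get $O(n^{2.5})$ for the cut, and note the remaining steps are quadratic. Your write-up is in fact slightly more explicit than the paper's (the per-bag count $\binom{\mathfrak{k}+1}{2}$ should strictly be doubled since the CFG is directed, but this does not affect the $O(n)$ conclusion).
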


In MC-SSAPRE, all steps except for the computation of a weighted minimum cut have linear complexity~\cite{MC-SSAPRE}. Using similar arguments as above for MC-PRE, we get:

\begin{theorem}
For structured programs, the runtime of MC-SSAPRE implemented using a deterministic push-relabel max flow algorithm is in $O(n^{2.5})$.
\end{theorem}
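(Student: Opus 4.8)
The plan is to isolate the weighted minimum cut as the single super-linear step and then exploit bounded tree-width to shrink the flow network enough that a deterministic push-relabel max-flow computation costs only $O(n^{2.5})$. First I would invoke the cited analysis of MC-SSAPRE to record that every phase other than the min-cut --- building the SSA-based factored redundancy graph, running the associated data-flow passes, and performing the final code insertion --- runs in linear time. This reduces the claim to bounding the cost of the one min-cut computation.

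Next I would bound the size of the flow network $G_{st}$ on which MC-SSAPRE computes its cut, paralleling the bound $|N_{st}| \leq 2|N| + 2$, $|E_{st}| \leq 2|E| + 4|N|$ already established for MC-PRE. The network is assembled from the SSA representation by adjoining a source and a sink and a bounded amount of auxiliary structure per relevant program point, so I would argue that both its node count and its edge count are linear in the CFG size, i.e. $|N_{st}| = O(|N|)$ and $|E_{st}| = O(|E| + |N|)$. The one point needing care is that SSA form introduces $\phi$-functions; I would note that for structured programs the number of these is itself linear in $|N|$, so passing to SSA does not break the linear relationship between network size and CFG size.

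Then I would bring in the tree-width hypothesis. By Definition~\ref{definition:tree-decomposition} each edge of the CFG is covered by some bag, and for fixed width a minimum tree-decomposition has $O(|N|)$ bags, each of bounded cardinality; summing edge endpoints over the bags gives $|E| = O(|N|)$. Combining this with the previous step yields a flow network with $N_{st} = O(n)$ nodes and $E_{st} = O(n)$ edges, where $n = |N|$. Substituting into the stated complexity of the deterministic highest-label push-relabel algorithm, $O(N_{st}^2\sqrt{E_{st}})$, gives $O(n^2\sqrt{n}) = O(n^{2.5})$, and adding the linear cost of the remaining phases leaves the total at $O(n^{2.5})$.

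The main obstacle is the second step. Unlike MC-PRE, whose flow network is read off almost directly from the CFG, MC-SSAPRE operates on SSA form, so I must confirm that the SSA construction --- in particular the $\phi$-placement --- keeps the network linear in the original CFG rather than causing a blow-up. Once that linearity is secured, the tree-width bound $m = O(n)$ and the max-flow complexity combine mechanically to deliver the result.
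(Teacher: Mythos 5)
Your proposal is correct and follows essentially the same route as the paper: the paper likewise cites the MC-SSAPRE analysis to isolate the weighted minimum cut as the only super-linear step, and then reuses the MC-PRE argument (linear-size flow network, $|E| = O(|N|)$ from bounded tree-width via Definition~\ref{definition:tree-decomposition}, and the $O(n^2\sqrt{m})$ push-relabel bound) to obtain $O(n^{2.5})$. Your explicit attention to the $\phi$-function count merely fills in what the paper leaves implicit under ``similar arguments as above.''
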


\section{Conclusion}

We have presented an algorithm for lospre, which is based on graph-structure theory. Like the earlier MC-PRE and MC-SSAPRE algorithms it is optimal, but it is more general, and much simpler and has much lower time complexity. We have proven that it is optimal and has linear runtime. An implementation in a mainstream C compiler demonstrates the practical feasibility of our approach and the low compilation time overhead.

We also presented improved time complexity bounds for deterministic implementations of MC-PRE and MC-SSAPRE.

\bibliographystyle{lospreplain}
\bibliography{bib}

\begin{thebibliography}{10}

\bibitem{MISRA}
{Guidelines for the Use of the C Language in Critical Systems (MISRA-C:2004,
  2nd Edition)}.
\newblock Technical report, MISRA, 2008.

\bibitem{Dominators}
Stephen Alstrup, Peter~W. Lauridsen, and Mikkel Thorup.
\newblock {Generalized Dominators for Structured Programs}.
\newblock {\em Algorithmica}, 27:244--253, 2000.

\bibitem{Seese}
Stefan Arnborg, Jens Lagergren, and Detlef Seese.
\newblock {Easy Problems for Tree-Decomposable Graphs}.
\newblock {\em Journal of Algorithms}, 12(2):308--340, 1991.

\bibitem{dynprog}
Richard~E. Bellman.
\newblock {On the theory of dynamic programming}.
\newblock In {\em Proceedings of the National Academy of Sciences}, vol.~38,
  pp. 716--719, 1952.

\bibitem{ComPRE}
Rastislav Bod\'{\i}k, Rajiv Gupta, and Mary~L. Soffa.
\newblock {Complete Removal of Redundant Expressions}.
\newblock In {\em Proceedings of the ACM SIGPLAN 1998 Conference on Programming
  Language Design and Implementation}, PLDI '98, pp. 1--14. Association for
  Computing Machinery, 1998.

\bibitem{bodlaenderalg}
Hans~L. Bodlaender.
\newblock A linear-time algorithm for finding tree-decompositions of small
  treewidth.
\newblock {\em SIAM Journal of Computation}, 25(6):1305--1317, 1996.

\bibitem{Bodlaender}
Hans~L. Bodlaender, Jens Gustedt, and Jan~A. Telle.
\newblock {Linear-Time Register Allocation for a Fixed Number of Registers}.
\newblock In {\em {Proceedings of the Ninth Annual ACM-SIAM Symposium on
  Discrete Algorithms}}, SODA '98, pp. 574--583. Society for Industrial and
  Applied Mathematics, 1998.

\bibitem{TreewidthcomputationsIII}
Hans~L. Bodlaender and Arie~M.C.A. Koster.
\newblock {Treewidth computations III. Exact algorithms and preprocessing}.
\newblock Unpublished manuscript.

\bibitem{Ada}
Bernd Burgstaller, Johann Blieberger, and Bernhard Scholz.
\newblock {On the Tree Width of Ada Programs}.
\newblock In {\em Reliable Software Technologies - Ada-Europe 2004}, vol. 3063
  of {\em Lecture Notes in Computer Science}, pp. 78--90. Springer, 2004.

\bibitem{spre2}
Qiong Cai and Jingling Xue.
\newblock {Optimal and Efficient Speculation-Based Partial Redundancy
  Elimination}.
\newblock In {\em Proceedings of the international symposium on Code generation
  and optimization: feedback-directed and runtime optimization}, CGO '03, pp.
  91--102. IEEE Computer Society, 2003.

\bibitem{Cherkassky1994}
Boris~V. Cherkassky and Andrew~V. Goldberg.
\newblock On implementing push-relabel method for the maximum flow problem.
\newblock In Egon Balas and Jens Clausen, editors, {\em Integer Programming and
  Combinatorial Optimization}, pp. 157--171, Berlin, Heidelberg, 1995. Springer
  Berlin Heidelberg.

\bibitem{gcse}
John Cocke.
\newblock {Global Common Subexpression Elimination}.
\newblock In {\em Proceedings of a symposium on Compiler optimization}, pp.
  20--24. Association for Computing Machinery, 1970.

\bibitem{Courcelle}
Bruno Courcelle.
\newblock {The Monadic Second-Order Logic of Graphs I. Recognizable Sets of
  Finite Graphs}.
\newblock {\em Information and Computation}, 85(1):12--75, 1990.

\bibitem{contiki}
Adam Dunkels, Björn Grönvall, and Thiemo Voigt.
\newblock {Contiki - a Lightweight and Flexible Operating System for Tiny
  Networked Sensors}.
\newblock In {\em Proceedings of the First IEEE Workshop on Embedded Networked
  Sensors (Emnets-I)}, November 2004.

\bibitem{sdcc}
Sandeep Dutta.
\newblock {Anatomy of a Compiler}.
\newblock {\em Circuit Cellar}, 121:30--35, 2000.

\bibitem{Eyerman2006}
Stijn Eyerman, James~E. Smith, and Lieven Eeckhout.
\newblock {Characterizing the branch misprediction penalty}.
\newblock In {\em Performance Analysis of Systems and Software, 2006 IEEE
  International Symposium on}, pp. 48--58, March 2006.

\bibitem{FrickGrohe}
Markus Frick and Martin Grohe.
\newblock {The complexity of first-order and monadic second-order logic
  revisited}.
\newblock {\em Annals of Pure and Applied Logic}, 130(1–3):3--31, 2004.

\bibitem{spre0}
Rajiv Gupta, David~A. Berson, and Jesse~Z. Fang.
\newblock {Path Profile Guided Partial Redundancy Elimination Using
  Speculation}.
\newblock In {\em Computer Languages, 1998. Proceedings. 1998 International
  Conference on}, pp. 230--239, May 1998.

\bibitem{Java}
Jens Gustedt, Ole Mæhle, and Jan~A. Telle.
\newblock {The Treewidth of Java Programs}.
\newblock In {\em Algorithm Engineering and Experiments}, vol. 2409 of {\em
  Lecture Notes in Computer Science}, pp. 57--59. Springer, 2002.

\bibitem{Halin}
Rudolf Halin.
\newblock {Zur Klassifikation der endlichen Graphen nach H. Hadwiger und K.
  Wagner}.
\newblock {\em {Mathematische Annalen}}, 172(1):46--78, 1967.

\bibitem{fppre}
R.~Nigel Horspool, David~J. Pereira, and Bernhard Scholz.
\newblock {Fast Profile-Based Partial Redundancy Elimination}.
\newblock In {\em Proceedings of the 7th joint conference on Modular
  Programming Languages}, JMLC'06, pp. 362--376. Springer, 2006.

\bibitem{PRELLVM}
Ben Jaiyen and Jamie Liu.
\newblock {Implementing Profile-Guided Speculative Code Motion in LLVM}.
\newblock Technical report, 2012.

\bibitem{Karger}
David~R. Karger and Clifford Stein.
\newblock {An $\tilde{O}(n^2)$ Algorithm for Minimum Cuts}.
\newblock In {\em Proceedings of the twenty-fifth annual ACM symposium on
  Theory of computing}, STOC '93, pp. 757--765. Association for Computing
  Machinery, 1993.

\bibitem{nice}
Ton Kloks.
\newblock {\em Treewidth: Computations and Approximations}.

\bibitem{lcm}
Jens Knoop, Oliver R\"{u}thing, and Bernhard Steffen.
\newblock {Lazy Code Motion}.
\newblock In {\em Proceedings of the ACM SIGPLAN 1992 conference on Programming
  language design and implementation}, PLDI '92, pp. 224--234. Association for
  Computing Machinery, 1992.

\bibitem{Naddr}
Philipp~K. Krause.
\newblock {Optimal Placement of Bank Selection Instructions in Polynomial
  Time}.
\newblock In {\em Proceedings of the 16th International Workshop on Software
  and Compilers for Embedded Systems}, M-SCOPES '13, pp. 23--30. Association
  for Computing Machinery, 2013.

\bibitem{KrauseRalloc}
Philipp~K. Krause.
\newblock {Optimal Register Allocation in Polynomial Time}.
\newblock In {\em Compiler Construction - 22nd International Conference, CC
  2013, Held as Part of the European Joint Conferences on Theory and Practice
  of Software, ETAPS 2013. Proceedings}, vol. 7791 of {\em Lecture Notes in
  Computer Science}, pp. 1--20. Springer, 2013.

\bibitem{Ctree}
Philipp~K. Krause, Lukas Larisch, and Felix Salfelder.
\newblock {The tree-width of C}.
\newblock {\em Discrete Applied Mathematics}, 278:136--152, 2020.

\bibitem{pre}
Etienne Morel and Claude Renvoise.
\newblock {Global Optimization by Suppression of Partial Redundancies}.
\newblock {\em Communications of the ACM}, 22(2):96--103, February 1979.

\bibitem{Valgrind}
Nicholas Nethercote and Julian Seward.
\newblock {Valgrind: A Framework for Heavyweight Dynamic Binary
  Instrumentation}.
\newblock In {\em Proceedings of the 2007 ACM SIGPLAN conference on Programming
  language design and implementation}, PLDI '07, pp. 89--100. Association for
  Computing Machinery, 2007.

\bibitem{ispre}
David~J. Pereira.
\newblock {\em Isothermality: making speculative optimizations affordable}.
\newblock PhD thesis, 2008.

\bibitem{Riseman1972}
Edward~M. Riseman and Caxton~C. Foster.
\newblock {The Inhibition of Potential Parallelism by Conditional Jumps}.
\newblock {\em IEEE Transactions on Computers}, C-21(12):1405--1411, Dec 1972.

\bibitem{GMIII}
Neil Robertson and Paul~D. Seymour.
\newblock {Graph Minors. III. Planar Tree-Width}.
\newblock {\em Journal of Combinatorial Theory, Series B}, 36(1):49--64, 1984.

\bibitem{BodlaenderImpl}
Hein R{\"o}hrig.
\newblock {\em {Tree Decomposition: A Feasibility Study}}.
\newblock Diplomarbeit, 1998.

\bibitem{spre1}
Bernhard Scholz, R.~Nigel Horspool, and Jens Knoop.
\newblock {Optimizing for Space and Time Usage with Speculative Partial
  Redundancy Elimination}.
\newblock In {\em Proceedings of the 2004 ACM SIGPLAN/SIGBED conference on
  Languages, compilers, and tools for embedded systems}, LCTES '04, pp.
  221--230. Association for Computing Machinery, 2004.

\bibitem{Thorup1998}
Mikkel Thorup.
\newblock {All Structured Programs Have Small Tree Width and Good Register
  Allocation}.
\newblock {\em Information and Computation}, 142(2):159--181, 1998.

\bibitem{MC-PRE}
Jingling Xue and Qiong Cai.
\newblock {A Lifetime Optimal Algorithm for Speculative PRE}.
\newblock {\em ACM Transactions on Architecture and Code Optimization},
  3(2):115--155, June 2006.

\bibitem{MC-SSAPRE}
Hucheng Zhou, Wenguang Chen, and Fred Chow.
\newblock {An SSA-based Algorithm for Optimal Speculative Code Motion under an
  Execution Profile}.
\newblock In {\em Proceedings of the 32nd ACM SIGPLAN conference on Programming
  language design and implementation}, PLDI '11, pp. 98--108. Association for
  Computing Machinery, 2011.

\end{thebibliography}

\end{document}